\documentclass{llncs}

\usepackage{amsmath,amssymb}
\usepackage{enumitem}
\usepackage[dvips]{epsfig}
\usepackage{epsf}
\usepackage{float}
\usepackage{array}

\def\fmod#1 #2{#1\ ({\rm mod}\ #2)}

\usepackage[dvips]{epsfig}
\usepackage{epsf}

\makeatletter
\def\Ddots{\mathinner{\mkern1mu\raise\p@
\vbox{\kern7\p@\hbox{.}}\mkern2mu
\raise4\p@\hbox{.}\mkern2mu\raise7\p@\hbox{.}\mkern1mu}}
\makeatother

\frontmatter

\title{Automatic Theorem-Proving in Combinatorics on Words}

\author{Dane Henshall \and Jeffrey Shallit}

\institute{School of Computer Science, University of Waterloo,
Waterloo, ON  N2L 3G1 Canada  \\
\email{dhenshall@uwaterloo.ca, shallit@cs.uwaterloo.ca}
}

\begin{document}

\maketitle

\begin{abstract}
We describe a technique for mechanically proving certain kinds of theorems in
combinatorics on words, using automata and a package for manipulating
them.  We illustrate our technique by solving, 
purely mechanically, an open problem of Currie and Saari on the lengths
of unbordered factors in the Thue-Morse sequence.
\end{abstract}

\centerline{\it Dedicated to the memory of Sheng Yu (1950--2012):  friend and colleague}

\section{Introduction}
\label{intro}

The title of this paper is a bit of a pun.  On the one hand, we are
concerned with certain natural questions about {\it automatic sequences}:
sequences over a finite alphabet where the $n$'th term is expressible
as a finite-state function of the base-$k$ representation of $n$.
On the other hand, we are interested in
answering these questions purely mechanically, in an {\it automated}
fashion.

Let ${\bf x} = (a(n))_{n \geq 0}$ be an infinite sequence over a finite
alphabet $\Delta$.  
Then $\bf x$ is said to be {\it $k$-automatic} if 
there is a deterministic finite automaton $M$ taking as input the
base-$k$ representation of $n$, and having $a(n)$ as the output
associated with the last state encountered
\cite{Allouche&Shallit:2003b}.  In this case, we say that $M$ {\it
generates} the sequence $\bf x$.

For example, in Figure~\ref{fig1}, we give an automaton generating the
well-known Thue-Morse sequence ${\bf t} = t(0) t(1) t(2) \cdots = {\tt
011010011001} \cdots$ \cite{Allouche&Shallit:1999}.  The input is $n$,
expressed in base $2$, and the output is the number contained in the
state last reached.  Thus $t(n)$ is the sum, modulo $2$, of the binary
digits of $n$.

\begin{figure}[H]
\leavevmode
\def\epsfsize#1#2{1.0#1}
\centerline{\epsfbox{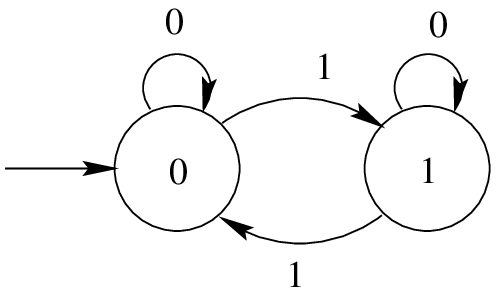}}
\protect\label{fig1}
\caption{A finite automaton generating the Thue-Morse sequence}
\end{figure}

    For at least 25 years, researchers have been interested in 
the algorithmic decidability of assertions about automatic sequences.
For example, in one of the earliest results, Honkala
\cite{Honkala:1986} showed that, given an automaton, it is decidable if
the sequence it generates is ultimately periodic.

Recently, Allouche et al.\ \cite{Allouche&Rampersad&Shallit:2009} found
a different proof of Honkala's result using a more general technique.
Using this technique, they were able to give algorithmic solutions to
many classical problems from combinatorics on words such as

\medskip

Given an automaton, is the generated sequence squarefree?  Or overlapfree?

\medskip

We write ${\bf x}[i] = a(i)$, and we let
${\bf x}[i..i+n-1]$ denote the {\it factor\/}
of length $n$ beginning at position $i$ in $\bf x$.
A sequence is said to be {\it squarefree} if it contains no factor
of the form $xx$, where $x$ is a nonempty word, and is said to
{\it overlapfree} if it contains no factor of the form $ayaya$, where $a$
is a single letter and $y$ is a possibly empty word.

The technique of Allouche et al.\ is at its core, very similar to
work of B\"uchi, Bruy\`ere, Michaux,
Villemaire, and others, involving formal logic; see, e.g.,
\cite{Bruyere&Hansel&Michaux&Villemaire:1994}.   The basic idea is as
follows:  given the automaton $M$, and some predicate $P(n)$ we want to
check, we alter $M$ by a series of transformations to a new automaton
$M'$ that accepts the base-$k$ representations of those integers $n$ for
which $P(n)$ is true.  Then we can check the assertion ``$\exists \, n \ 
P(n)$'' simply by checking if $M'$ accepts anything (which can be done
by a standard depth-first search on the underlying directed graph of
the automaton).  We can check the assertion ``$\forall \, n \ P(n)$'' by
checking if $M'$ accepts everything.  And we can check assertions like
``$P(n)$ holds for infinitely many $n$'' by checking if $M'$ has a
reachable cycle from which a final state is reachable.

Using this idea, Allouche et al.\ were able to show to reprove, purely
mechanically using a computer program, the classic theorem of Thue
\cite{Thue:1906,Thue:1912,Berstel:1995} that the Thue-Morse sequence
$\bf t$ is overlapfree.

More recently, the technique has been applied to give decision
procedures for other properties of automatic sequences.  
For example, Charlier et al.\ \cite{Charlier&Rampersad&Shallit:2011}
showed that it can be used 
to decide if a given $k$-automatic sequence
\begin{itemize}
\item contains powers of arbitrarily large exponent;
\item is recurrent;
\item is uniformly recurrent.
\end{itemize}

A sequence is said to be {\it recurrent\/} if every factor that occurs,
occurs infinitely often.  A sequence $\bf x$ is said to be
{\it uniformly recurrent} if it is recurrent and furthermore for
each finite factor $w$ occurring in $\bf x$, there is a constant $c(w)$ such
that two consecutive occurrences of $w$ are separated by at most
$c(w)$ positions.

More recently, 
variations of the technique have been used to
\begin{itemize}
\item compute the critical exponent;
\item compute the initial critical exponent;
\item decide if a sequence is linearly recurrent;
\item compute the Diophantine exponent.
\end{itemize}
(For definitions of these terms see \cite{Shallit:2011}.)

\section{The decision procedure}

In \cite{Charlier&Rampersad&Shallit:2011} we have the following
theorem:

\begin{theorem}
If we can express a property of a $k$-automatic sequence $\bf x$ using
quantifiers, logical operations, integer variables, the operations of
addition, subtraction, indexing into $\bf x$, and comparison of
integers or elements of $\bf x$, then this property is algorithmically
decidable.
\label{logic}
\end{theorem}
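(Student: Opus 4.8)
The plan is to proceed by structural induction on the formula $\varphi$ expressing the property. The central idea, going back to B\"uchi and developed by Bruy\`ere et al.\ \cite{Bruyere&Hansel&Michaux&Villemaire:1994}, is to maintain the invariant that for every subformula $\varphi(n_1,\ldots,n_m)$ with free integer variables $n_1,\ldots,n_m$, we can effectively construct a deterministic finite automaton $A_\varphi$ that reads the base-$k$ representations of the tuple $(n_1,\ldots,n_m)$ in parallel --- as a word over the alphabet $\{0,1,\ldots,k-1\}^m$, with the shorter representations padded on the left by zeros so that all $m$ coordinates have equal length --- and accepts exactly those encodings for which $\varphi$ holds. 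Once this invariant is established for the whole formula, deciding the property reduces to running the automaton tests described in Section~\ref{intro}: a claim $\exists\,n\,P(n)$ is true iff $A_P$ accepts some word (an emptiness test by depth-first search), and $\forall\,n\,P(n)$ is true iff $A_{\neg P}$ accepts nothing.

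For the base cases I would handle the atomic formulas. Comparison of integers, such as $n_i = n_j$ or $n_i < n_j$, is a well-known regular relation: an automaton scanning the two representations in parallel can track equality and the sign of the difference. Addition, $n_i + n_j = n_\ell$, is likewise recognizable because base-$k$ addition is carried out by a finite-state process propagating a single carry digit; subtraction is then obtained by rearranging the equation, with care taken that the operation is defined only when the result is nonnegative. For atomic formulas involving the sequence itself, such as ${\bf x}[n_i] = {\bf x}[n_j]$ or ${\bf x}[n_i] = a$, I would invoke the hypothesis that $\bf x$ is $k$-automatic: the generating automaton $M$ reads the base-$k$ representation of $n$ and outputs ${\bf x}[n]$, so running synchronized copies of $M$ on the coordinates $n_i$ and $n_j$ and comparing the output states yields the desired automaton.

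The inductive step treats the connectives and quantifiers. Conjunction and disjunction correspond to intersection and union of the accepted languages, handled by the standard product construction; negation corresponds to complementation, available because the automata are deterministic (determinizing first if necessary). Existential quantification $\exists\,n_i\,\varphi$ corresponds to projecting away the $i$-th track of the alphabet, thereby guessing the digits of $n_i$; this yields a nondeterministic automaton that I would then determinize. Universal quantification is rewritten as $\neg\exists\,n_i\,\neg\varphi$.

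The step I expect to be the main obstacle is the bookkeeping forced by the encoding convention, rather than any individual construction. Because a fixed integer has infinitely many representations differing in the number of leading zeros, and because the parallel encoding pads coordinates to a common length, one must ensure at every stage that $A_\varphi$ behaves consistently across all admissible paddings --- in particular, the accepted language must be closed under adding and removing leading zeros. This is delicate precisely under projection: quantifying out a variable can change which padding lengths are legal, so one has to normalize the automaton (for example, by adding transitions that absorb leading-zero blocks) in order to preserve the invariant. Keeping these representational details consistent through the induction is where the care lies; the automaton transformations themselves are routine.
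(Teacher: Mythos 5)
Your proposal is correct and follows essentially the same route as the paper, which only sketches this argument and defers details to Bruy\`ere et al.: automata for atomic relations (addition via carry propagation, comparison, indexing via the generating automaton), product constructions for Boolean connectives, nondeterministic projection for existential quantifiers, and determinization plus complementation for universal ones, with care taken that acceptance is invariant under padding with zeros. The only cosmetic difference is that the paper reads representations least-significant-digit first (so the padding is by trailing rather than leading zeros), which does not affect the argument.
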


Let us outline how the decision procedure works.

First, the input to the decision procedure:
an automaton $M = (Q, \Sigma_k, \Delta, \delta, q_0, \tau)$
generating the $k$-automatic sequence
$\bf x$.  Here
\begin{itemize}
\item $Q$ is a nonempty set of states;
\item $\Sigma_k := \lbrace 0, 1, \ldots,
k-1 \rbrace$;
\item $\Delta$ is the output alphabet;
\item $\delta: Q \times \Sigma \rightarrow Q$ is the transition
function;
\item $q_0$ is the initial state; and
\item $\tau:Q \rightarrow \Delta$ is the output mapping.
\end{itemize}

In this paper, we assume that the automaton takes as input
the representation of $n$ in base $k$, {\it starting with the least
significant digit}; we call this the {\it reversed representation} of
$n$ and write it as $(n)_k$.  We allow leading zeroes in the representation
(which, because of our convention, are actually trailing zeroes).
Thus, for example, $011$ and $01100$ are both acceptable representations
for $6$ in base $2$.

We might also need to encode pairs, triples, or $r$-tuples of
integers.  We handle these
by first padding the reversed representation of
the smaller integer with trailing zeroes,
and then coding the $r$-tuple as a word over $\Sigma_k^r$.
For example, the pair $(20,13)$ could be represented in base-$2$ as
$$ [0,1][0,0][1,1][0,1][1,0] ,$$
where the first components spell out $00101$ and the second components
spell out $10110$. Of course,
there are other possible representations, such as
$$ [0,1][0,0][1,1][0,1][1,0][0,0],$$
which correspond to non-canonical
representations having trailing zeroes; these are also permitted.

Rather than present a detailed proof, we illustrate the idea of the
decision procedure in the proof of the following new result:

\begin{theorem}
The following problem is algorithmically decidable:
given two $k$-automatic sequences $\bf x$ and $\bf y$,
generated by automata $M_1$ and $M_2$, respectively, 
decide if $\bf x$ is a shift of $\bf y$ (that is, 
decide if there exists a constant $c$ such that
${\bf x}[n] = {\bf y}[n+c]$ for all $n \geq 0$.
\end{theorem}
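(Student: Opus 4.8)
The plan is to write ``$\bf x$ is a shift of $\bf y$'' as a first-order predicate in the restricted language of Theorem~\ref{logic}, and then to walk through the automaton constructions that the decision procedure attaches to each connective and quantifier. The property is precisely
$$ \exists\, c \geq 0 \ \ \forall\, n \geq 0 \ \bigl( {\bf x}[n] = {\bf y}[n+c] \bigr), $$
and it uses only two integer variables, one addition, two indexings, one equality test on elements of the alphabet, and two quantifiers. Every ingredient is permitted, so Theorem~\ref{logic} already yields decidability; the point of the proof is to make the machine explicit. Throughout I would use the reversed, least-significant-digit-first encoding fixed above, allow arbitrary trailing zeroes, and code the pair $(n,c)$ as a word over $\Sigma_k^2$ exactly as in the worked example for $(20,13)$.

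The innermost object is a deterministic automaton $A$ over $\Sigma_k^2$ that reads $(n,c)$ and accepts iff ${\bf x}[n] = {\bf y}[n+c]$. I would build $A$ as a product machine that, scanning digit pairs $[n_i, c_i]$ from least to most significant, does three things at once: it runs $M_1$ on the stream $n_i$ so as to track the state determining ${\bf x}[n]$; it carries a single base-$k$ carry $\gamma$ and computes on the fly the sum digit $s_i = (n_i + c_i + \gamma) \bmod k$, with $\gamma' = \lfloor (n_i+c_i+\gamma)/k \rfloor$, feeding each $s_i$ into a run of $M_2$ so as to track the state determining ${\bf y}[n+c]$; and at end of input it applies the two output maps $\tau_1,\tau_2$ and accepts iff they agree. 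The carry arithmetic is the textbook base-$k$ adder, which is correct precisely because digits arrive least-significant-first.

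With $A$ fixed, the two quantifiers become standard automaton operations. For $\forall n$ I would use the equivalence $\forall n\,\phi \equiv \neg\,\exists n\,\neg\phi$: complement $A$ to get a machine for ${\bf x}[n] \neq {\bf y}[n+c]$, project $\Sigma_k^2$ onto its $c$-coordinate to eliminate $n$ existentially, determinize the resulting nondeterministic automaton, and complement once more. This yields a deterministic automaton $B$ over $\Sigma_k$ accepting $(c)_k$ exactly when $\forall n\,({\bf x}[n]={\bf y}[n+c])$. The outer $\exists c$ is then a single emptiness test: a depth-first search on the transition graph of $B$ decides whether $L(B) \neq \emptyset$, and $\bf x$ is a shift of $\bf y$ iff that search succeeds.

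The main obstacle is not any one construction but the bookkeeping around multiple representations when $n$ is eliminated. Two points need care. First, since trailing zeroes are allowed, each integer has many encodings, so after projection one must check that the automata are closed under padding; only then does ``some accepting run on some encoding of some $n$'' faithfully mean ``some integer $n$,'' which is what legitimizes quantifying over $\Enn$ rather than over words. Second, because of carries the sum $n+c$ can be one digit longer than both $n$ and $c$, so the coded input must be padded long enough for $M_2$ to consume all of $(n+c)_k$ and for the final carry to be absorbed; getting the end-of-input conventions exactly right is the delicate step. Once these are pinned down, correctness of $A$ follows by induction on the input length, and decidability is immediate from the finiteness of $B$.
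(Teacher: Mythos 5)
Your proposal is correct and follows essentially the same route as the paper: an inner automaton that runs $M_1$ on $(n)_k$ while computing $(n+c)_k$ on the fly with a base-$k$ carry and feeding it to $M_2$, elimination of $n$ by nondeterminism/projection, determinization and complementation to realize the universal quantifier over $n$, and a depth-first-search nonemptiness test for the outer $\exists c$. The only cosmetic difference is that the paper builds the NFA for the negated inner predicate directly by guessing $(n)_k$, whereas you build the positive pair-automaton first and then complement and project; your added remarks on padding closure and the extra carry digit are sound points of care that the paper leaves implicit.
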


\begin{proof}
We first create an NFA $M$ that accepts
the language 
$$\lbrace (c)_k \ : \ \exists n \text{ such that }
	{\bf x}[n] \not= {\bf y}[n+c] \rbrace .$$
To do so, on input $(c)_k$, $M$ 
\begin{itemize}
\item guesses $w_1 = (n)_k$ nondeterministically
(perhaps with trailing zeroes appended),
\item simulates $M_1$ on $w_1$,
\item adds $n$ to $c$ and computes the base-$k$ representation of
$w_2 = (n+c)_k$ digit-by-digit ``on the fly'', keeping track of carries,
as necessary, and simulates $M_2$ on $w_2$, and
\item accepts if the outputs of both machine differ.
\end{itemize}

We now convert $M$ to a DFA $M'$, and change final states to non-final
(and vice versa).  Then $M'$ accepts the language
$$ \lbrace (c)_k \ : \ {\bf x}[n] = {\bf y}[n+c] \text{ for all } n \geq 0
\rbrace.$$
Thus, $\bf x$ is a shift of $\bf y$ if and only if $M'$ accepts any
word, which is easily checked through depth-first search.
\qed
\end{proof}

\begin{remark}
As we can see, the size of the automata involved depends, in an unpleasant
way, on the number
of quantifiers needed to state the logical expression characterizing
the property being checked, because existential quantifiers are implemented
through nondeterminism, and universal quantifiers are implemented 
through nondeterminism and complementation
(which is implemented in a DFA
by exchange of the role
final and non-final
states).   Thus each new quantifier could increase the current
number of states, say $n$, to $2^n$ using the subset construction.
If the original automata have at most $N$
states, it follows that
the running time is bounded by an expression of the form
$$2^{2^{\Ddots^{ 2^{p(N)}}}}$$
where $p$ is a polynomial and the number of exponents in the tower
is one less than the number of quantifiers in the logical formula
characterizing the property being checked.

This extraordinary computational complexity
raises the natural question of whether the decision procedure could
actually be implemented for anything but toy examples.  Luckily
the answer seems to be yes --- at least in some cases --- as we will see below.
\end{remark}

\section{Borders}

A word $w$ is
{\it bordered} if it begins and ends with the same word $x$ with
$0 < |x| \leq |w|/2$;
Otherwise it is {\it unbordered}.  
An example in English of a bordered word is {\tt entanglement}.  
A bordered word is also called {\it bifix} 
in the literature, and unbordered words are also
called {\it bifix-free} or {\it primary}.  

Bordered and unbordered words have been actively studied in the
literature, particularly with regard to the Ehrenfeucht-Silberger
problem; see, for example,
\cite{Ehrenfeucht&Silberger:1979,Nielsen:1973,Duval:1980,Duval:1984,Harju&Nowotka:2007,Holub:2005,Costa:2003,Holub&Nowotka:2010,Rampersad&Shallit&Wang:2011,Duval&Harju&Howotka:2008}, just to name a few.

Currie and Saari \cite{Currie&Saari:2009} studied the unbordered
factors of the Thue-Morse sequence $\bf t$.  They
proved that
if $n \not\equiv \fmod{1} {6}$, then $\bf t$
has an unbordered factor of length $n$.  (Also see
\cite[Lemma 4.10 and Problem 4.1]{Saari:2008}.)  However, this is not
a necessary condition, as 
$${\bf t}[39..69] = 
{\tt 0011010010110100110010110100101},$$
which is an unbordered factor of length $31$.
Currie and Saari
left it as an open problem to give a complete characterization
of the integers $n$ for which $\bf t$ has an unbordered factor of length $n$.

The following theorem and proof, quoted practically verbatim 
from 
\cite{Charlier&Rampersad&Shallit:2011}, shows that, more generally,
the characteristic sequence of $n$ for which a given $k$-automatic
sequence has an unbordered factor of length $n$, is itself $k$-automatic:

\begin{theorem}
Let ${\bf x} = a(0) a(1) a(2) \cdots$ be a $k$-automatic sequence.
Then the associated infinite sequence
${\bf b} = b(0) b(1) b(2) \cdots$ defined by
$$
b(n) = \begin{cases}
	1, & \text{if $\bf x$ has an unbordered factor
	of length $n$;} \\
	0, & \text{otherwise;}
	\end{cases}
$$
is $k$-automatic.
\end{theorem}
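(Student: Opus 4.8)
The plan is to bring the statement under the umbrella of Theorem~\ref{logic}. I would write down a single predicate $P(n)$, in the restricted logical language of that theorem, that holds exactly when $\bf x$ has an unbordered factor of length $n$, and then appeal to the construction behind the decision procedure: it does not merely answer ``$P(n)$?'' for one $n$, but transforms the automaton for $\bf x$ into an automaton accepting precisely $\{\, (n)_k : P(n) \,\}$. Interpreting that automaton's acceptance as its output ($\text{accept} \mapsto 1$, $\text{reject} \mapsto 0$) exhibits $\bf b$ as a $k$-automatic sequence, which is exactly the claim.

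All the work, then, is in the encoding. First I would unwind the definition of \emph{unbordered} given above. The factor ${\bf x}[i..i+n-1]$ has a border of length $j$ precisely when its length-$j$ prefix and length-$j$ suffix agree, i.e.
$$ \forall t \ \bigl( t < j \implies {\bf x}[i+t] = {\bf x}[i+n-j+t] \bigr); $$
by the paper's definition the factor is bordered iff this holds for some $j$ with $1 \le j$ and $2j \le n$. Negating and quantifying over the starting position, I arrive at
$$ P(n): \quad \exists i \ \forall j \ \Bigl( (1 \le j \,\wedge\, 2j \le n) \implies \exists t \ ( t < j \,\wedge\, {\bf x}[i+t] \neq {\bf x}[i+n-j+t] ) \Bigr). $$
Every ingredient here --- the quantifier alternation $\exists i\,\forall j\,\exists t$, the connectives, the additions $i+t$ and the subtraction $n-j$, the two indexings, and the letter inequality --- lies within the vocabulary permitted by Theorem~\ref{logic}, so the construction applies verbatim.

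The remaining points require care but not ingenuity. The subtraction $n-j$ is well defined because the guard $2j \le n$ forces $j \le n$ (if one prefers to avoid subtraction entirely, replace ${\bf x}[i+n-j+t]$ by ${\bf x}[i+m+t]$ together with the conjunct $j+m=n$). The degenerate lengths $n=0$ and $n=1$ are handled automatically: the guard $1 \le j \wedge 2j \le n$ is then unsatisfiable, the inner implication is vacuously true, and $P(n)$ holds with $i=0$, matching the convention that the empty word and single letters are unbordered. Finally, because I quantify $j$ only over the range $2j \le n$ dictated by the definition, no separate combinatorial lemma reducing arbitrary borders to short ones is needed.

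I expect the main obstacle to be nothing more than faithful translation: getting the suffix index $i+n-j+t$ and the quantifier order $\exists i\,\forall j\,\exists t$ exactly right, since any slip in the arithmetic or in the alternation would silently produce a correct-looking automaton for the wrong sequence.
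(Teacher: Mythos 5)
Your proposal is correct and follows essentially the same route as the paper: you write the characterization $\exists i\, \forall j\, \exists t$ with the identical arithmetic on indices (the paper's $\exists j\, \forall l\, \exists i$ with ${\bf x}[j+i] \neq {\bf x}[j+n-l+i]$), and then invoke the automaton construction underlying Theorem~\ref{logic}, which is precisely what the paper does by hand via the chain of NFAs and complementations $M_1, \ldots, M_4$. Your added remarks on subtraction and the degenerate lengths $n \le 1$ are sensible housekeeping that the paper leaves implicit.
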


\begin{proof}
The sequence $\bf x$ has an unbordered factor of length $n$

\smallskip

iff

\smallskip

\noindent $\exists j \geq 0$ such that the factor of length $n$
beginning at position $j$ of $\bf x$ is unbordered

\smallskip

iff

\smallskip

\noindent
there exists an integer $j \geq 0$ such that for all possible lengths $l$
with $1 \leq l \leq n/2$, there is an integer $i$ with $0 \leq i < l$
such that 
the supposed border of length $l$ beginning and ending the 
factor of length $n$ beginning at position $j$ of $\bf x$
actually differs in the $i$'th position

\smallskip

iff

\smallskip

\noindent there exists an integer $j \geq 0$ such that for all integers
$l$ with $1 \leq l \leq n/2$
there exists an integer $i$ with $0 \leq i < l$ such
that ${\bf x}[j+i] \not= {\bf x}[j+n-l+i]$.

\smallskip

Now assume $\bf x$ is a $k$-automatic sequence, generated by some
finite automaton.  We show how to implement the characterization given
above with an automaton.

We first create an NFA that given
the $(j,l,n)_k$ guesses the base-$k$
representation of $i$, digit-by-digit, checks that $i < l$,
computes $j+i$ and $j+n-l+i$ on the fly, and 
checks that ${\bf x}[j+i] \not= {\bf x}[j+n-l+i]$.  If such an $i$ is found,
it accepts.  We then convert this to a DFA, and interchange accepting
and nonaccepting states.  This DFA $M_1$ accepts
$(j,l,n)_k$ such that there is no $i$, $0 \leq i < l$ such that
${\bf x}[j+i] = {\bf x}[j+n-l+i]$.   We then use $M_1$ as a subroutine to 
build an NFA $M_2$ that on input $(j,n)_k$
guesses $l$, checks that $1 \leq l \leq n/2$, and
calls $M_1$ on the result.  We convert this to a DFA and interchange
accepting and nonaccepting states to get $M_3$.  Finally, this
$M_3$ is used as a subroutine to build an NFA $M_4$ that
on input $n$ guesses $j$ and calls $M_3$.  

The characteristic sequence of these integers $n$ 
is therefore $k$-automatic.
\qed
\end{proof}

Since the proof is constructive, one can, in principle, carry out the
construction to get an explicit description of the lengths for which
the Thue-Morse sequence has an unbordered factor.

Doing so results in the following theorem:

\begin{theorem}
There is an unbordered factor of length $n$ in $\bf t$ if and only if
the base-$2$ representation of $n$ (starting with the most significant
digit) is not of the form $1 (01^*0)^* 1 0^* 1$.
\end{theorem}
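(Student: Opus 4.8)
The plan is to take the automaton constructed by the previous theorem's decision procedure and simply read off its structure. The preceding theorem guarantees that the sequence $\bf b$ recording which lengths $n$ admit an unbordered factor of $\bf t$ is $2$-automatic, and its proof is fully constructive: chaining the NFA/DFA conversions for $M_1$ through $M_4$ produces an explicit finite automaton $M_4$ that accepts exactly the reversed base-$2$ representations $(n)_2$ of those $n$ for which $b(n)=1$. So the first step is to actually carry out this construction for $\bf x = \bf t$, using the software package alluded to in the abstract, obtaining a concrete DFA. One then minimizes this DFA to get a canonical small object that can be inspected by hand.

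Having the minimized automaton in hand, the next step is to describe the language it accepts by a regular expression. Because the decision procedure works with the reversed representation $(n)_2$ (least significant digit first), while the theorem is stated in terms of the ordinary most-significant-digit-first representation, I would either reverse the automaton (swapping start and accepting states and reversing all transitions, then redeterminizing) or equivalently reverse the regular expression, so that the final description matches the stated form $1(01^*0)^*10^*1$. The claim is really an equality of two regular languages: the language $L$ accepted by $M_4$ (suitably reversed) on the one hand, and the language $L' = \Sigma_2^* \setminus \{\,w : w = 1(01^*0)^*10^*1\,\}$ of representations whose most-significant-digit form is \emph{not} of the forbidden shape, on the other.

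The cleanest way to certify this equality mechanically is to build a DFA for the complement language $1(01^*0)^*10^*1$ directly from the regular expression, take the symmetric difference (or intersection of one with the complement of the other) with $M_4$, and verify via depth-first search that the resulting automaton accepts nothing; this is exactly the emptiness check that underlies the whole decision-procedure framework. Equivalently, after minimization both automata should be isomorphic, which a minimization routine can confirm. I would present the verification this way so that every step is itself a decidable automaton computation, keeping the proof entirely within the mechanical paradigm the paper advocates.

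The main obstacle I anticipate is not mathematical but representational: matching the two conventions correctly. One must be careful that the automaton $M_4$ treats $n$ via its reversed representation and permits trailing zeroes (i.e.\ leading zeroes in the ordinary form), whereas the regular expression $1(01^*0)^*10^*1$ is written for the canonical most-significant-digit-first representation with a leading $1$. Reconciling these — deciding exactly how padding and the empty/short representations for small $n$ are handled, and confirming the expression correctly accounts for them — is where an error could creep in. Once the conventions are pinned down and the reversal is performed consistently, the emptiness check settles the equivalence unconditionally, so the conceptual content is entirely front-loaded into the construction from the previous theorem.
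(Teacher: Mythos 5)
Your proposal is correct and follows essentially the same route as the paper: carry out the constructive four-stage procedure from the preceding theorem to obtain the automaton $M_4$, minimize it (the paper gets $7$ states), and observe that it accepts exactly the reverse of $1(01^*0)^*10^*1$, modulo the same care about reversed representations and trailing zeroes that you flag. Your suggestion to certify the final regular-expression identification by an explicit symmetric-difference emptiness check is a slightly more formal finish than the paper's "read it off the minimized automaton," but it is the same proof.
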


\begin{proof}
The proof of this theorem is purely mechanical, and it involves
performing a sequence of operations on finite automata.  The second
author wrote a program in C++, using his own automata package,
to perform these operations.  There are 
four stages to the computation, which are described in detail below.

\medskip

\noindent{\bf Stage 1}\\

Let $T$ be the automaton of Figure~\ref{fig1} generating the
Thue-Morse sequence $\bf t$.
Stage 1 takes $T$ as input and outputs an automaton $M_1$, where $M_1$
accepts $w \in (\{0,1\}^4)^*$ if and only if $w$ is the
base-$2$ representation
of some $(n,j,l,i) \in S_1$, where
\begin{equation}
S_1 = \{ (n,j,l,i) \ : \ 0 < l \le n/2 \text{ and } i < j \text{ and } \textbf{t}[j+i] \ne \textbf{t}[n+j-l+i]  \}.
\end{equation}

The size of $M_1$ was only 102 states.
However, since the input alphabet for
$M_1$ is of size $2^4 = 16$,
a considerable amount of complexity is being stored in
the transition matrix.
Stage 1 passed all 1.3 million tests meant to ensure that
$M_1$ corresponds to $S_1$.
\\

\noindent \textbf{Stage 2}\\

The purpose of Stage 2 is to remove the variable $i$ by simulating it.
The resulting machine, after being negated, accepts $(n,j,l)$ iff the length $n$
factor of $t$ starting at index $j$ has a border of length $l$.  So
Stage 2 produces the automaton $M_2$, which is the negation of the
result of simulating $i$. More formally, $M_2$ accepts a word $w \in
(\{0,1\}^3)^*$ if and only if $w$ is the base-$2$ representation
of some $(n,j,l) \in
S_2$, where
\begin{equation}
S_2 = \{ (n,j,l) : \not\exists i \textrm{ for which } (n,j,l,i) \in S_1 \}
\end{equation}

The size of $M_2$ after subset construction was 8689 states, and it 
minimized down to 127 states. The output of Stage 2 passed all 1.6
million tests meant to ensure that $M_2$ corresponds to $S_2$.\\

\noindent \textbf{Stage 3}\\

The purpose of Stage 3 is to remove $l$ by simulating it. By the end of
Stage 3, most of the work has already been done. The output of Stage 3,
$M_3$, accepts an input word $w \in (\{0,1\}^2)^*$ if and only if $w$
is the base-$2$ representation of some $(n,j) \in S_3$, where
\begin{equation}
S_3 = \{ (n,j) : \not\exists l \textrm{ such that } (n,j,l) \in S_2 \}
\end{equation}
or, in other words
\begin{equation}
S_3 = \{ (n,j) : \textrm{ \textbf{t} has an unbordered factor of length $n$ at index $j$} \}.
\end{equation}

The size of $M_3$ after subset construction was 1987 states, and it
minimized down to 263 states. The output of Stage 3 passed all 1.9
million tests meant to ensure that $M_3$ corresponds to $S_3$.\\

\noindent \textbf{Stage 4}\\

Finally, Stage 4 simulates $j$ on $M_3$ and negates the result. So the
output of Stage 3 is an automaton that accepts the binary
representation of a positive integer $n>1$ if and only if the Thue-Morse
word has no unbordered factor of length $n$. Formally put, the
automaton $M_4$ produced by Stage 4 accepts a word $w \in \{0,1\}^*$ if
and only if $w$ is the base-$2$ representation of some $n \in S_4$, where
\begin{equation}
S_4 = \{ n \in \mathbb{N} \ : \ n > 1, \not\exists j \textrm{ for which } (n,j) \in S_3\}.
\end{equation}

The size of $M_4$ after subset construction is 2734 states, and it minimized to
7 states.  $M_4$ accepts the reverse of $1(01^*0)^*10^*1$. Therefore
the Thue-Morse word has an unbordered factor of length $n$ if and only
if the base-2 representation of $n$ (starting with the most significant
digit) is not of the form $1(01^*0)^*10^*1$.

The total computation took 9 seconds of CPU time on a 2.9GHz Dell XPS laptop.
\qed
\end{proof}

\begin{remark}
Here are some additional implementation details.
	
In order to implement the needed operations on automata, we
must decide on an encoding of elements of $(\Sigma_k^n)^*$.
We could do this by performing a perfect shuffle of each individual
word over $\Sigma_k^*$, or by letting the alphabet itself be represented
by $k$-tuples.  The decision represents a tradeoff between state size
and alphabet size.  We used the latter representation, since
(a) it makes the algorithms considerably easier to
implement and understand and (b)
decreases the number of states needed.

It was mentioned earlier how many tests were passed in each
stage. In order to make sure that the final automaton is what
we expect, a number of tests are run after each stage
on the output of that stage.

For example, let $\textbf{x}$ be an automatic sequence. The
testing framework requires a C++ function which given $n$
computes $\textbf{x}[n]$. Before any operations are
done, the automaton given for $\textbf{x}$ is tested against
the C++ function to make sure that they match for the first
10,000 elements. Then, at each stage before Stage 4 the
resulting automaton is tested to give confidence that the
operations on the automata are giving the desired results.

For example, after Stage 2 of computing the set of lengths for
which there exists an unbordered factor of an automatic
sequence $\textbf{x}$, we expect the machine $M_2$ to accept
the language $S_2$, where \begin{equation} S_2 = \{(n,j,l) :
\not\exists i \textrm{ for which }
\textbf{x}[j+i]=\textbf{x}[n+j-l+i]\} \end{equation} This is
then tested by making sure $M_2$ accepts $(n,j,l)_k$ if and
only if $(n,j,l) \in S_2$ for all $n,j,l \leq 1400$.
These tests were invaluable to 
debugging, and provide confidence in the final result of
the computation.

Finally, we have to address the issue of multiple representations.  It
is easy to forget that automata accept words in ${\Sigma_k}^*$, and not
integers. 
For some operations, such as
complement and intersection, it is crucial that if one binary
representation is accepted by the automaton, then all binary
representations must be accepted.
\end{remark}

\section{Additional results}

We also applied our decision procedure above to two other famous sequences:
the Rudin-Shapiro sequence \cite{Rudin:1959,Shapiro:1952}
and the paperfolding sequence \cite{Dekking&MendesFrance&vanderPoorten:1982}.

For a word $w \in 1 (0+1)^*$, we define $a_w (n)$ to be the number
of (possibly overlapping) occurrences of $w$ in the (ordinary,
unreversed) base-$2$ representation of $n$.  Thus, for example,
$a_{11} (7) = 2$.

The {\it Rudin-Shapiro} sequence ${\bf r} = r(0) r(1) r(2) \cdots$
is then defined to be $r(n) = (-1)^{a_{11} (n)}$.  It is a $2$-automatic
sequence generated by an automaton of four states.

The {\it paperfolding sequence} ${\bf p} = p(0) p(1) p(2) \cdots$
is defined as follows:  writing 
$(n)_2 00$ as $1^i 0 aw$ for some $i \geq 0$ some
$a \in \lbrace 0, 1\rbrace$, and some $w\in \lbrace 0,1 \rbrace^*$, 
we have $p(n) = (-1)^a$.  It is a $2$-automatic sequence generated by
an automaton of four states.

\begin{theorem}
The Rudin-Shapiro sequence has an unbordered factor of every length.
\end{theorem}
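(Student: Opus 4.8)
The plan is to apply, essentially verbatim, the constructive four-stage automaton construction from the proof that $\mathbf{b}$ is $k$-automatic, but now taking as input the four-state automaton generating the Rudin-Shapiro sequence $\mathbf{r}$ in place of the Thue-Morse automaton $T$. Concretely, I would first build an NFA realizing the innermost predicate $\mathbf{r}[j+i] \neq \mathbf{r}[n+j-l+i]$ subject to the constraints $0 < l \le n/2$ and $0 \le i < l$ (Stage 1), guessing $i$ digit-by-digit and computing the two indices on the fly with carries. I would then eliminate the variables $i$, $l$, and $j$ in turn (Stages 2--4), each time simulating the existential or universal quantifier by nondeterministic guessing, applying the subset construction to return to a DFA, and interchanging accepting and non-accepting states to realize the necessary complementations. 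This is exactly the quantifier prefix $\exists j\, \forall l\, \exists i$ underlying the unbordered-factor characterization.

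The output of Stage 4 is a DFA $M_4$ that accepts the reversed base-$2$ representation of $n$ precisely when $\mathbf{r}$ has \emph{no} unbordered factor of length $n$. To establish the theorem it then suffices to check, by a single depth-first search on the transition graph of $M_4$, that $M_4$ accepts no word at all, after disposing of the trivial cases $n = 0, 1$, for which the factors are vacuously unbordered since there is no admissible border length $l$ with $1 \le l \le n/2$. If $M_4$ recognizes the empty language, then by the correctness of the construction the set $\{\, n : \mathbf{r} \text{ has no unbordered factor of length } n \,\}$ is empty, which is exactly the assertion that $\mathbf{r}$ has an unbordered factor of every length.

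I expect the main obstacle to be purely computational rather than conceptual: by the remark above, each of the three quantifier eliminations may blow up the state count exponentially through the subset construction, and the Rudin-Shapiro automaton has four states rather than the two of the Thue-Morse automaton, which could enlarge all of the intermediate machines. The correctness of each stage is guaranteed by Theorem~\ref{logic}, so the only real risk is that the intermediate DFAs become too large to compute or minimize in practice; the successful nine-second computation for Thue-Morse, however, gives good reason to believe the Rudin-Shapiro case remains well within reach. As before, each stage can be independently validated by testing the resulting automaton against a direct computation of $\mathbf{r}$ on all small inputs, providing confidence that the final emptiness check for $M_4$ is meaningful.
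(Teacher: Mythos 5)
Your proposal follows essentially the same route as the paper: run the four-stage quantifier-elimination construction on the four-state Rudin--Shapiro automaton and verify that the set of lengths with no unbordered factor is empty. The only divergence is a practical one the paper reports: the Stage 4 NFA (6234 states) could not be determinized directly, so the authors determinized its reversal instead, obtaining a 30-state DFA that minimizes to a single state accepting everything --- logically the same emptiness/universality check you describe.
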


\begin{proof}
We applied the same technique discussed previously for
the Thue-Morse sequence.

Here is a summary of the computation:

\noindent Stage 1:  269 states\\
Stage 2:  85313 states minimized to 1974\\
Stage 3:  48488 states minimized to 6465\\
Stage 4:  6234 states. \\

The Stage 4 NFA has 6234 states.  We were unable to determinize this
automaton directly (using two different programs) due to an explosion
in the number of states created.  Instead, we reversed the NFA 
(creating an NFA for $L^R$) and determinized this instead.  The
resulting DFA has 30 states, and upon minimization, gives a $1$-state
automaton accepting all strings.
\qed
\end{proof}

\begin{theorem}
The paperfolding sequence has an unbordered factor of length
$n$ if and only if the reversed representation $(n)_2$ is
rejected by the automaton given in Figure~\ref{fig2}.
\end{theorem}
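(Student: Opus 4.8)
The plan is to apply, essentially verbatim, the same four-stage construction used above for $\bf t$, this time starting from the four-state automaton that generates the paperfolding sequence $\bf p$. By the preceding theorem the set of lengths $n$ for which $\bf p$ admits an unbordered factor is $k$-automatic, and the construction behind that theorem is effective; so the real content of the statement is to carry the construction out on $\bf p$ and to exhibit the resulting minimized machine as Figure~\ref{fig2}.

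Concretely, I would first run Stage~1 with $\bf p$ in place of $\bf t$: build the NFA that, on input $(n,j,l,i)_2$, guesses the digits of $i$, checks $0 \le i < l$ and $0 < l \le n/2$, computes $j+i$ and $j+n-l+i$ on the fly with carries, simulates the paperfolding automaton on each, and accepts exactly when ${\bf p}[j+i] \neq {\bf p}[j+n-l+i]$; call the resulting machine $M_1$. Stage~2 existentially projects away $i$ (by nondeterminism) and complements, producing $M_2$ that accepts $(n,j,l)_2$ precisely when the length-$n$ factor at position $j$ has a border of length $l$. Stage~3 projects away $l$ and complements, producing $M_3$ that accepts $(n,j)_2$ precisely when $\bf p$ has an unbordered factor of length $n$ beginning at position $j$. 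Finally Stage~4 projects away $j$ and complements, yielding an automaton $M_4$ accepting the reversed representations of exactly those $n$ for which $\bf p$ has \emph{no} unbordered factor of length $n$. Minimizing $M_4$ gives the machine of Figure~\ref{fig2}, and by construction $(n)_2$ is rejected by it if and only if $\bf p$ has an unbordered factor of length $n$, which is the claim.

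Two points need attention, and the second is where I expect the real difficulty. First, since these machines read words over $\Sigma_2^r$ rather than integers, every projection and complement must respect the trailing-zero (padding) convention: each intermediate machine must accept one representation of a tuple if and only if it accepts all of them, since otherwise the complementation steps would produce a machine that is incorrect on padded inputs. Second, the universal quantifiers are realized by the subset construction, so the intermediate automata can blow up exponentially; as the Rudin--Shapiro computation above already illustrates, a direct determinization may be infeasible, and one may instead have to determinize the reversed NFA for $L^R$ and reverse again. Moreover, unlike the Thue--Morse case, where $M_4$ collapsed to a clean regular expression, I do not expect the paperfolding answer to simplify to a short pattern, so the characterization is genuinely the minimized automaton itself. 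The concluding step is therefore not an algebraic simplification but a verification: running, after each stage, the same battery of tests used above (comparing the machine's behaviour against a direct computation of $\bf p$ for all small $n,j,l,i$) to gain confidence that the automaton of Figure~\ref{fig2} recognizes the intended language.
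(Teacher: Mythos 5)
Your proposal follows exactly the paper's own route: the proof here is simply the four-stage projection-and-complementation pipeline from the Thue--Morse case, re-run with the paperfolding automaton as input, with the minimized Stage~4 machine exhibited as Figure~\ref{fig2}. Your additional remarks on padding conventions and determinization blowup are consistent with the paper's implementation notes, so this is the same approach and is correct.
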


\begin{figure}[htb]
\leavevmode
\def\epsfsize#1#2{0.5#1}
\centerline{\epsfbox{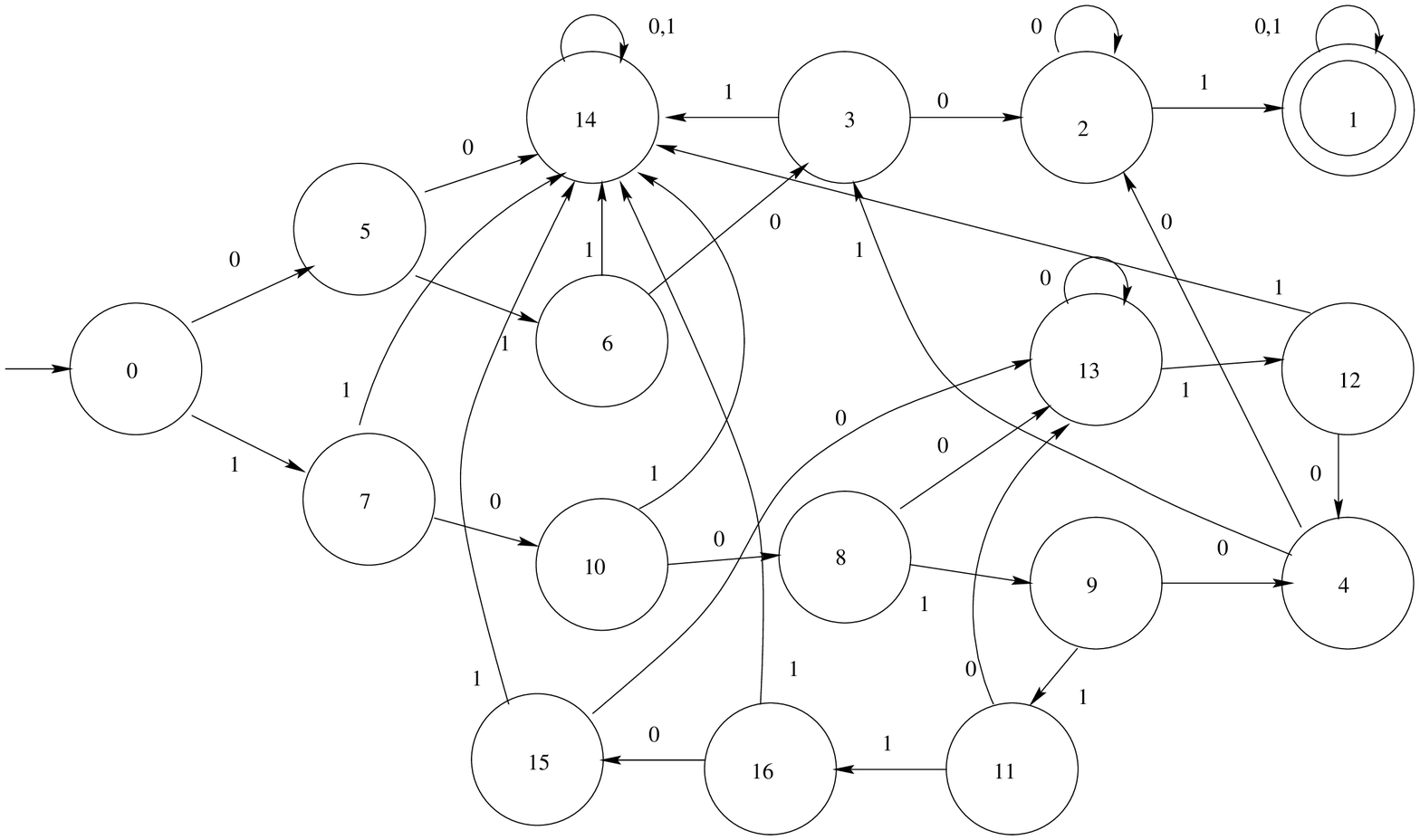}}
\protect\label{fig2}
\caption{A finite automaton for unbordered factors in the paperfolding word}
\end{figure}

\begin{proof}
We applied the same technique discussed previously for the Thue-Morse
sequence.

Here is a summary of the computation:  6 seconds cpu time on a 2.9GHz Dell XPS laptop.

\medskip

\noindent Stage 1, 159 states \\
Stage 2, 1751 minimized down to 89 states \\
Stage 3, 178 minimized down to 75 states \\
Stage 4, 132 minimize down to 17 states .
\qed
\end{proof}

\section{Further work}

In the future, we plan to extend this work to explicitly compute the
number of distinct unbordered factors of length $n$ in the Thue-Morse sequence.
(A conjecture about this number was given in
\cite{Charlier&Rampersad&Shallit:2011}.)

\section{Open problems}

Which of the problems
mentioned in \S~\ref{intro} are algorithmically decidable for the more
general class of morphic sequences?

Can the techniques be applied to detect
abelian powers in automatic sequences?

\end{document}